\begin{document}

\date{}
\title{On a generalization of the Dirac bracket in the 
De Donder-Weyl Hamiltonian formalism\footnote{
Proc. 10th Int. Conf. on Differential Geometry  and its  Applications 
DGA 2007, Olomouc, Czech Republic.}
}

\author{I. KANATCHIKOV$^*$}

\address{Institute of Theoretical Physics,
TU Berlin,  \\ 
D-10623 Berlin, Germany \\   
$^*$E-mail: ivar@itp.physik.tu-berlin.de
}

\begin{abstract}
The elements of the contrained dynamics algorithm 
in the De Donder-Weyl (DW) Hamiltonian theory  for degenerate 
Lagrangian theories 
are discussed. A generalization of the Dirac bracket to 
the DW Hamiltonian theory 
with second class constraints (defined in the text) is presented.   
\end{abstract}

\keywords{De Donder-Weyl Hamiltonian formalism, 
Poisson-Gerstenhaber brackets, degenerate Lagrangians, 
constrained dynamics, Dirac brackets.  
} 


\bodymatter


\newcommand{\beq}{\begin{equation}}
\newcommand{\eeq}{\end{equation}}
\newcommand{\beqa}{\begin{eqnarray}}
\newcommand{\eeqa}{\end{eqnarray}}
\newcommand{\nn}{\nonumber}

\newcommand{\half}{\frac{1}{2}}

\newcommand{\xt}{\tilde{X}}

\newcommand{\uind}[2]{^{#1_1 \, ... \, #1_{#2}} }
\newcommand{\lind}[2]{_{#1_1 \, ... \, #1_{#2}} }
\newcommand{\com}[2]{[#1,#2]_{-}} 
\newcommand{\acom}[2]{[#1,#2]_{+}} 
\newcommand{\compm}[2]{[#1,#2]_{\pm}}

\newcommand{\lie}[1]{\pounds_{#1}}
\newcommand{\co}{\circ}
\newcommand{\sgn}[1]{(-1)^{#1}}
\newcommand{\lbr}[2]{ [ \hspace*{-1.5pt} [ #1 , #2 ] \hspace*{-1.5pt} ] }
\newcommand{\lbrpm}[2]{ [ \hspace*{-1.5pt} [ #1 , #2 ] \hspace*{-1.5pt}
 ]_{\pm} }
\newcommand{\lbrp}[2]{ [ \hspace*{-1.5pt} [ #1 , #2 ] \hspace*{-1.5pt} ]_+ }
\newcommand{\lbrm}[2]{ [ \hspace*{-1.5pt} [ #1 , #2 ] \hspace*{-1.5pt} ]_- }

\newcommand{\pbr}[2]{ \{ \hspace*{-2.2pt} [ #1 , #2\hspace*{1.4 pt} ] 
\hspace*{-2.3pt} \} }
\newcommand{\nbr}[2]{ [ \hspace*{-1.5pt} [ #1 , #2 \hspace*{0.pt} ] 
\hspace*{-1.3pt} ] }

\newcommand{\we}{\wedge}
\newcommand{\nbrpq}[2]{\nbr{\xxi{#1}{1}}{\xxi{#2}{2}}}
\newcommand{\lieni}[2]{$\pounds$${}_{\stackrel{#1}{X}_{#2}}$  }

\newcommand{\rbox}[2]{\raisebox{#1}{#2}}
\newcommand{\xx}[1]{\raisebox{1pt}{$\stackrel{#1}{X}$}}
\newcommand{\xxi}[2]{\raisebox{1pt}{$\stackrel{#1}{X}$$_{#2}$}}
\newcommand{\ff}[1]{\raisebox{1pt}{$\stackrel{#1}{F}$}}
\newcommand{\dd}[1]{\raisebox{1pt}{$\stackrel{#1}{D}$}}
\newcommand{\der}{\partial}
\newcommand{\oo}{$\Omega$}
\newcommand{\Om}{\Omega}
\newcommand{\om}{\omega}
\newcommand{\eps}{\epsilon}
\newcommand{\si}{\sigma}
\newcommand{\Lm}{\bigwedge^*}

\newcommand{\inn}{\hspace*{2pt}\raisebox{-1pt}{\rule{6pt}{.3pt}\hspace*
{0pt}\rule{.3pt}{8pt}\hspace*{3pt}}}
\newcommand{\sro}{Schr\"{o}dinger\ }

\renewcommand{\bm}{\boldmath} 

\newcommand{\vol}{\omega}
               \newcommand{\dvol}[1]{\der_{#1}\inn \vol}

\newcommand{\bd}{\mbox{\bf d}}
\newcommand{\bder}{\mbox{\bm $\der$}}
\newcommand{\bI}{\mbox{\bm $I$}}

\newcommand{\be}{\beta} 
\newcommand{\ga}{\gamma} 
\newcommand{\de}{\delta} 
\newcommand{\Ga}{\Gamma} 
\newcommand{\gmu}{\gamma^\mu}
\newcommand{\gnu}{\gamma^\nu}
\newcommand{\ka}{\kappa}
\newcommand{\hka}{\hbar \kappa}
\newcommand{\al}{\alpha}
\newcommand{\lapl}{\bigtriangleup}
\newcommand{\psib}{\overline{\psi}}
\newcommand{\Psib}{\overline{\Psi}}
\newcommand{\Phib}{\overline{\Phi}}
\newcommand{\derts}{\stackrel{\leftrightarrow}{\der}}
\newcommand{\what}[1]{\widehat{#1}}
\newcommand{\pib}{\overline{\pi}}
\newcommand{\Cb}{\overline{C}}
\newcommand{\ub}{\overline{u}}

\newcommand{\bx}{{\bf x}}
\newcommand{\bk}{{\bf k}}
\newcommand{\bq}{{\bf q}}

\newcommand{\omk}{\omega_{\bf k}} 
\newcommand{\lpl}{\ell}
\newcommand{\zb}{\overline{z}} 

\newcommand{\dv}{\mbox{\sf d}}

\newcommand{\BPsi}{{\bf \Psi}} 
\newcommand{\BH}{{\bf H}} 
\newcommand{\BS}{{\bf S}} 
\newcommand{\BN}{{\bf N}} 


\section{Introduction}\label{aba:sec1}

In Refs.\cite{ikanat93,romp98,bial96,goslar21} a 
generalization of the Poisson bracket to the De Donder-Weyl (DW) Hamiltonian 
formulation \cite{rund,gimm} in field theory 
(and the multiparametric calculus of variations)  
has been proposed. It leads to a Gerstenhaber algerba of brackets defined
on specific horizontal differential forms (called Hamiltonian forms).  
The construction  assumes that the corresponding Lagrangian theory 
is regular (in the sense of DW Hamiltonian formulation):
\beq \label{regularity}
\det \left |\left | \frac{\der^2 L}{\der \phi_a^\mu\der \phi_a^\nu }
\right |\right| \neq 0 . 
\eeq
However, many field theories of physical interests, such as spinor fields, 
are not regular in the above sense. 
In this paper  we discuss a corresponding 
extension of the formalism  to the degenerate case 
and generalize  
the above mentioned Poisson-Gerstenhaber bracket 
similarly to the construction of the Dirac bracket in the constrained systems 
with second-class constraints \cite{henneaux}. 

Note that the regularity condition in the DW Hamiltonian formalism 
is different from the one in the standard Hamiltonian formalism: 
$\det \left |\left | {\der^2 L}/{\der \dot{\phi}_a\der \dot{\phi}_a }
\right |\right| \neq 0 $, which involves only the time derivatives 
of fields $\dot{\phi}_a$. 
As a consequence, the theories which are regular from the point of view 
of the DW formalism can be irregular from the point of view of the 
standard Hamiltonian formalism and {\em vice versa}. 

This  work is motivated by  the  project of 
manifestly space-time symmetric {\em precanonical} quantization 
of field theory \cite{bial97,ijtp2001,opava2001,ym} 
based on the DW Hamiltonian formalism, which requires a well 
understood formalism  for degenerate theories on the classical level. 

\section{The polysymplectic structure and the Poisson-Gerstenhaber brackets} 

In this section we briefly recall the construction and properties of 
the Poisson-Gerstenhaber brackets in DW Hamiltonian formalism 
which are used in the following section.  
For further generalizations and a detailed geometrical treatment 
of the related issues I refer to the existing literature 
\cite{romanroy,francav,paufler,deleon,krupkova,sardan}. Here I closely follow 
my previous work \cite{romp98,bial96,goslar21,opava2001}.

The DW Hamiltonian formalism \cite{rund,gimm} is a 
space-time symmetric generalization of the Hamiltonian formalism 
from mechanics to field theory. Given the first order Lagrangian density 
$L = L(y^a, y^a_\mu, x^\nu)$ we introduce the variables  (polymomenta)  
\[p^\mu_a := {\der L}/{\der y^a_\mu}\] 
and the DW Hamiltonian function 
\[H:= y^a_\mu p^\mu_a - L = H(y^a, p^\mu_a, x^\mu).\]
Then the Euler-Lagrange equations take the form
\beq \label{dw}
\der_\mu y^a (x) = {\der H}/{\der p^\mu_a},  
\quad 
\der_\mu p^\mu_a (x) = - {\der H}/{\der y^a} ,  
\eeq 
which is equivalent to the Euler-Lagrange field equations if 
$L$ is {regular} in the sense of Eq. (\ref{regularity}). 

Geometrically, classical  fields $y^a= y^a(x)$ are sections in the 
{\em covariant configuration bundle} $Y\rightarrow X$ 
over an oriented $n$-dimensional space-time manifold $X$ 
with the volume form 
$\omega$. The local coordinates in 
 $Y$ are $(y^a,x^\mu)$. 
If 
$\bigwedge{}^p_q(Y)$ denotes the space of 
$p$-forms on $Y$ which are annihilated by $(q+1)$ arbitrary vertical 
vectors of $Y$, then $\bigwedge^n_1(Y)\rightarrow Y$     
generalizes the cotangent bundle  and serves as a  model of 
 a 
{\em multisymplectic phase space\/} \cite{gimm} 
which possesses the canonical $n$-form structure 
\beq 
\Theta_{MS} = p_a^\mu dy^a \we \omega_\mu + p\, \omega 
\eeq
called multisymplectic, where 
$\omega_\mu := \der_\mu\inn\omega$ 
are the basis of $\bigwedge^{n-1} T^*X$.
A section $p= - H(y^a,p^\mu_a,x^\nu)$ yields the 
multidimensional 
{\em Hamiltonian Poincar\'e-Cartan form\/}  $\Theta_{PC}$.  

In order to introduce the Poisson brackets 
which reflect the dynamical structure of DW Hamiltonian 
equations (\ref{dw}) we need a structure which is independent 
of $p$ or a choice of $H$. We define the 
{\em extended polymomentum phase space\/} as the quotient bundle 
$Z$: $\bigwedge^{n}_1(Y) / \bigwedge^{n}_0(Y) \!\rightarrow \! Y.  
$ 
The local coordinates on $Z$ are $(y^a,p_a^\nu,x^\nu)=: (z^v, x^\mu)=z^M$. 
We introduce a  canonical structure on $Z$ 
 as an equivalence class 
of forms 
$\Theta := [p_a^\mu dy^a\we \omega_\mu \quad {\rm mod} \bigwedge{}^{n}_0(Y)]$. 
The {\em polysymplectic structure\/} on $Z$ is defined as  
an equivalence class 
 %
of 
forms $\Omega$ given by 
\beq
\Omega := [d\Theta \quad {\rm mod} \;\mbox{$\bigwedge^{n+1}_1(Y)$}] 
= [- dy^a\we dp^\mu_a \we \omega_\mu \quad {\rm mod}  
\;\mbox{$\bigwedge^{n+1}_1(Y)$}] . 
\eeq 
The equivalence classes are introduced as an alternative to 
the explicit introduction of a non-canonical connection 
on the multisymplectic phase space in order to define 
the polysymplectic structure as a ``vertical part'' of the 
multisymplectic form $d\Theta_{MS}$. 
The fundamental constructions, 
such as the Poisson bracket below,  are 
designed to be independent of the choice of representatives in the 
equivalence classes 
and the choice of a connection. 

A multivector field of degree $p$, $\xx{p}\in \bigwedge^p TZ$,  
is called {\em vertical\/} 
if $\xx{p}\inn F  = 0$ for any form $ F \in \bigwedge^{*}_0(Z)$. 
The polysymplectic form establishes a map of 
horizontal $p-$forms  
$\ff{p}$$\in$$\bigwedge^p_0(Z)$, $p=0,1,..., (n-1)$,
to vertical multivector fields 
of degree $(n-p)$, $\xx{n-p}{}_F$,  called {\em Hamiltonian\/}: 
\beq \label{map}
\xx{n-p}{}_F\inn \Omega = d \ff{p}.  
\eeq 
The forms for which the map (\ref{map}) exists are also 
called {\em Hamiltonian\/}. 
More precisely, 
horizontal  forms are mapped to the {\em equivalence 
classes\/} of Hamiltonian multivector fields 
modulo the {\em characteristic\/}   
multivector fields $\xx{p}_0$: $\xx{p}_0\inn\Omega = 0$, $p=2,...,n$. 
It is important to note that  the space of Hamiltonian forms is not stable 
with respect to the exterior product.  The natural product operation of 
Hamiltonian forms is the {\em co-exterior\/} product 
 \beq
\ff{p}\bullet \ff{q} := *^{-1}(*\ff{p}\we *\ff{q})  
\eeq 
which is graded commutative and associative.

The Poisson bracket of Hamiltonian forms is given by the formula 
\beq \label{pbrformula}
\pbr{\ff{p}{}_1}{\ff{q}{}_2} = (-1)^{(n-p)} \xx{n-p}{}_1 \inn d \ff{q}{}_2 
= (-1)^{(n-p)} \xx{n-p}{}_1 \inn \xx{n-q}{}_2 \inn \Omega . 
\eeq 
In fact, it is  induced by the 
 {Schouten-Nijenhuis} bracket $\nbr{\;}{\,}$ 
of the corresponding 
Hamiltonian multivector fields: 
\beq
- d \pbr{\ff{p}}{\ff{q}} := 
\nbr{\xx{n-p}}{\xx{n-q}} \inn \Omega . 
\eeq
The algebraic properties of the bracket 
 are summarized  in the following 
\begin{theorem} \label{the1} 
The space of Hamiltonian forms with the 
operations $\pbr{\;}{\,}$ and $\bullet$  
is  a {\bf (Poisson-)Gerstenhaber algebra\/}, i.e. 
\beqa
\pbr{\ff{p} }{\ff{q} } &=& -(-1)^{g_1 g_2}
\pbr{\ff{q}}{\ff{p}}, \nn \\ 
\mbox{$(-1)^{g_1 g_3} \pbr{\ff{p}}{\pbr{\ff{q}}{\ff{r}}}$} 
&\!+\!& 
\mbox{$(-1)^{g_1 g_2} \pbr{\ff{q}}{\pbr{\ff{r}}{\ff{p}}}$} 
  \\
&& \hspace*{15pt}+ \quad \! 
\mbox{$(-1)^{g_2 g_3} \pbr{\ff{r}}{\pbr{\ff{p}}{\ff{q}}} 
 \; = \;0, $}  
 \nn \\
\pbr{\ff{p}}{\ff{q}\bullet \ff{r}} 
&=& 
\pbr{\ff{p}}{\ff{q}}\bullet \ff{r}
+ (-1)^{g_1(g_2+1)} \ff{q}\bullet\pbr{\ff{p}}{\ff{r}},   
\nn 
\eeqa 
where $g_1 = n-p-1$,  $g_2 = n-q-1$,  $g_3 = n-r-1$.  
\end{theorem}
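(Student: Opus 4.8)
The plan is to deduce all three identities from the structural properties of the Schouten-Nijenhuis bracket $\nbr{\;}{\,}$ on vertical multivector fields, transporting them through the correspondence $\ff{p}\mapsto\xx{n-p}$ furnished by the map (\ref{map}) together with the defining relation $-d\pbr{\ff{p}}{\ff{q}} = \nbr{\xx{n-p}}{\xx{n-q}}\inn\Omega$. The organising observation is a degree match: a Hamiltonian $p$-form corresponds to a Hamiltonian multivector field of degree $n-p$, whose Schouten-Nijenhuis degree $(n-p)-1$ coincides with the Gerstenhaber degree $g_1 = n-p-1$ of the form. Since it is precisely this shifted grading that governs the signs in the graded Lie algebra of multivector fields, every sign in the theorem should reproduce itself automatically once an identity is carried across the correspondence.

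Two lemmas underlie the transport. The first is well-definedness: one checks that $\pbr{\ff{p}}{\ff{q}}$ is again Hamiltonian and that neither expression in (\ref{pbrformula}) depends on the representatives chosen for $\xx{n-p}$ modulo the characteristic fields $\xx{p}_0$ (with $\xx{p}_0\inn\Omega=0$) nor for $\Omega$ modulo $\bigwedge^{n+1}_1(Y)$; this rests on the verticality of the Hamiltonian fields. The second is a homomorphism lemma: the Hamiltonian multivector field of $\pbr{\ff{p}}{\ff{q}}$ equals $\nbr{\xx{n-p}}{\xx{n-q}}$ up to sign and characteristic fields, obtained by applying (\ref{map}) to $\pbr{\ff{p}}{\ff{q}}$ and comparing with the defining relation; the degrees agree since $\pbr{\ff{p}}{\ff{q}}$ has degree $p+q+1-n$. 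This lemma is what permits brackets to be nested, and hence is indispensable for the Jacobi identity.

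With these in hand the first two identities reduce to sign bookkeeping. Graded antisymmetry follows from $\nbr{A}{B}=-(-1)^{(a-1)(b-1)}\nbr{B}{A}$: with $a=n-p$, $b=n-q$ the exponent is $g_1 g_2$, so the defining relation gives $d\pbr{\ff{p}}{\ff{q}}=-(-1)^{g_1 g_2}d\pbr{\ff{q}}{\ff{p}}$, and the explicit formula (\ref{pbrformula}) removes the $d$-ambiguity to yield equality of the forms themselves. The graded Jacobi identity is obtained by contracting the graded Jacobi identity for $\nbr{\;}{\,}$ on $\xx{n-p}{}_1,\xx{n-q}{}_2,\xx{n-r}{}_3$ with $\Omega$ and rewriting each nested Schouten-Nijenhuis bracket as a double Poisson bracket through the homomorphism lemma; the shifted degrees $g_1,g_2,g_3$ reproduce the stated signs, and one concludes by passing from the resulting identity under $d$ to an identity of forms via (\ref{pbrformula}).

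The Leibniz (Poisson) property is the main obstacle, being the only identity that genuinely couples $\pbr{\;}{\,}$ with $\bullet$. The essential new ingredient is a compatibility lemma: the Hamiltonian multivector field of $\ff{p}\bullet\ff{q}$ equals $\xx{n-p}{}_1\we\xx{n-q}{}_2$ up to sign and characteristic fields. The degrees again match, since $\ff{p}\bullet\ff{q}$ has degree $p+q-n$ so its Hamiltonian field has degree $(n-p)+(n-q)$; the lemma itself is proved by contracting both sides with $\Omega$ and using $*^{-1}$-duality to relate $d(\ff{p}\bullet\ff{q})$ to the wedge of the individual fields. Granting it, the graded Leibniz rule $\nbr{A}{B\we C}=\nbr{A}{B}\we C+(-1)^{(a-1)b}B\we\nbr{A}{C}$ for the Schouten-Nijenhuis bracket transports, via both lemmas, to the stated rule with $\bullet$ in place of $\we$; here $(a-1)b=(n-p-1)(n-q)$ coincides with $g_1(g_2+1)$, confirming the sign. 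The delicate points throughout are the signs concealed in $*^{-1}$ and in the equivalence-class definitions, and verifying that the compatibility lemma holds modulo characteristic fields rather than only up to a closed correction.
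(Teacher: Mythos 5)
Your treatment of the graded antisymmetry and the graded Jacobi identity follows the same route the paper indicates: both are induced by the corresponding graded Lie properties of the Schouten--Nijenhuis bracket of the associated Hamiltonian multivector fields, with the explicit formula (\ref{pbrformula}) used to resolve the ambiguity left by the defining relation $-d\pbr{\ff{p}}{\ff{q}}=\nbr{\xx{n-p}}{\xx{n-q}}\inn\Omega$, which by itself fixes the bracket only up to a closed form. That part is essentially sound, although for the Jacobi identity you should say explicitly why the cyclic sum, which your argument only shows to be \emph{closed}, actually vanishes; this requires returning to (\ref{pbrformula}) and is not automatic.

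The Leibniz property is where your argument has a genuine gap. The compatibility lemma you rely on --- that the Hamiltonian multivector field of $\ff{p}{}_1\bullet\ff{q}{}_2$ is $\xx{n-p}{}_1\we\, \xx{n-q}{}_2$ up to sign and characteristic fields --- is false. Contracting $\xx{n-p}{}_1\we\, \xx{n-q}{}_2$ with $\Omega$ gives $\pm\,\xx{n-p}{}_1\inn\xx{n-q}{}_2\inn\Omega=\pm\pbr{\ff{p}{}_1}{\ff{q}{}_2}$ by (\ref{pbrformula}), not $d(\ff{p}{}_1\bullet\ff{q}{}_2)$, and adding characteristic fields cannot repair this since they are annihilated by $\Omega$. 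Concretely, for $F=p^\mu_1\omega_\mu$ and $G=p^\mu_2\omega_\mu$ the associated fields are $-\der/\der y^1$ and $-\der/\der y^2$, whose wedge contracts with $\Omega=[-dy^a\we dp^\mu_a\we\omega_\mu]$ to zero, while $d(F\bullet G)\neq 0$; moreover $F\bullet G$, being quadratic in the polymomenta, is not a Hamiltonian $(n-2)$-form, so its Hamiltonian multivector field does not even exist. The map $F\mapsto X_F$ is simply not a homomorphism from $(\,\cdot\,\bullet\,\cdot\,)$ to $(\,\cdot\we\cdot\,)$. The paper's route is different: the operation $G\mapsto\pbr{F}{G}=(-1)^{n-p}\xx{n-p}{}_F\inn dG$ is a graded Lie derivative of the form $[\,i_{X_F},d\,]$, and one establishes that it is a graded derivation of the $\bullet$-algebra by the Fr\"olicher--Nijenhuis classification of graded derivations adapted to the co-exterior algebra of forms; no product-compatibility of $F\mapsto X_F$ is needed or available. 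Your sign bookkeeping $(a-1)b=g_1(g_2+1)$ is correct, but it decorates a lemma that does not hold.
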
 
The graded Lie algebra properties are induces by the graded Lie 
properties of the 
Schouten-Nijenhuis bracket. The graded Leibniz property is a consequence 
of the Fr\"olicher-Nijenhuis classification of graded derivations 
adapted to the co-exterior  algebra of forms.


Let us mention a few applications of 
the Poisson-Gerstenhaber (PG) brackets. They enable us to identify the pairs of 
canonically conjugate variables which may become   
a starting point of quantization: 
\beq
\pbr{p_a^\mu\omega_\mu}{y^b}
= 
\delta^b_a , \quad 
\pbr{p_a^\mu\omega_\mu}{y^b\omega_\nu}
=
\delta^b_a\omega_\nu, \quad 
\pbr{p_a^\mu}{y^b\omega_\nu}
= 
\delta^b_a\delta^\mu_\nu .   
\refstepcounter{equation} 
\eeq
In fact, geometric prequantization of PG brackets has been discussed 
in our previous paper \cite{opava2001}. 
The brackets can be used also in order to write the DW 
Hamiltonian equations in the bracket form: 
for any Hamiltonian $(n-1)-$form  $F:=F^\mu\omega_\mu$ 
the equations of motion take the form 
\beq \label{bracketform}
\bd\!\bullet\! F = -\sigma (-1)^{n}\pbr{H}{F}  
+ d^h \!\bullet \! F , 
\eeq
where $\bd\bullet$ denotes the ``total 
co-exterior differential'' 
\beq
\bd\!\bullet\! \ff{p}:= \frac{1}{(n-p)!}\der_M F\uind{\mu}{n-p}\der_\mu z^M 
dx^\mu\bullet \der\lind{\mu}{n-p}\inn\ \omega,  
\eeq 
$d^h$ is the ``horizontal co-exterior differential'': 
\beq
d^h\!\bullet \! \ff{p}:= \frac{1}{(n-p)!}\der_\mu F\uind{\mu}{n-p}   
dx^\mu\bullet \der\lind{\mu}{n-p} \inn\ \omega, 
\eeq
and $\sigma=\pm 1$ for the Euclidean/Minkowskian signature of the 
base manifold $X$.   Consequently, the conservation of the quantity 
represented by $(n-1)$-form $F$ is equivalent to the condition 
$\pbr{H}{F}  = 0$.

\section{The constrained dynamics and the Dirac bracket of forms}

The violation of the regularity condition (\ref{regularity})
implies the noninvertibility of the space-time gradients of fields 
$\phi^a_\nu$ 
as functions of the field variables   $\phi^a$ 
and polymomenta  $p_a^\nu$. 
In this case  the polymomenta  are not all independent,
and there 
exist relations 
\beq \label{eqconstr}
C_A(\phi^a, p_a^\nu) =0  
\eeq 
that follow from the definition  of the polymomenta and the form 
of the Lagrangian. The conditions (\ref{eqconstr}), 
as usual, will be called {\bf primary constraints } 
to emphasize that the field 
equations are not used to obtain these relations and that they imply 
no restriction on the field variables and their space-time gradients. 


The canonical DW Hamiltonian function is now not unique. It can be replaced by 
the effective DW Hamiltonian function which is weakly equal to  $H$: 
\beq
\tilde{H} := H + u_A C_A \approx H .
\eeq
In the present approach we restrict our attention to the  constraints 
which can be organized into 
Hamiltonian $(n-1)-$forms 
\beq
{\mathfrak C_m}= C_m^\mu\om_\mu . 
\eeq
In this case the effective Hamiltonian can be written as 
\beq
\tilde{H} := H + u_m \bullet {\mathfrak C_m},  
\eeq
where the Lagrange multipliers are organized into one-forms $u_m$. 

\begin{definition}
{\it A dynamical variable represented by a semi-basic Hamiltonian form ${F}$ 
of degree $|{F} |=f$  
is said to be {\bf first class } if its PG bracket with every constraints $(n-1)$-form ${\mathfrak{C}_m}$ weakly vanishes, } 
\beq
\pbr{{F}}{\mathfrak{C}_m} \approx 0. 
\eeq 

\end{definition}

\begin{definition}
{\it A semi-basic Hamiltonian form 
${F}$  is said to be {\bf second class } if there is at least one 
constraint 
such that its PG bracket with ${F}$ does not vanish weakly. }
\end{definition}

\medskip

\begin{proposition}
{\it The first-class property is {\bf preserved} under the PG bracket operation, 
i.e. the PG bracket of two  first-class Hamiltonian forms is first class. }
\end{proposition}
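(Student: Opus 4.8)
The plan is to deduce the statement directly from the graded Jacobi identity of Theorem~\ref{the1}, in close analogy with the classical argument that the Poisson bracket of two first-class quantities is again first class. Let $F$ and $G$ be first-class semi-basic Hamiltonian forms of form-degrees $p$ and $q$, with shifted grades $g_1=n-p-1$ and $g_2=n-q-1$, and let $\mathfrak{C}_m = C_m^\mu\om_\mu$ be the constraint $(n-1)$-forms. The goal is to show $\pbr{\pbr{F}{G}}{\mathfrak{C}_m}\approx 0$.

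First I would apply the graded Jacobi identity to the triple $F,\,G,\,\mathfrak{C}_m$. Since $\mathfrak{C}_m$ has form-degree $n-1$, its shifted grade is $g_3=n-(n-1)-1=0$, so the sign factors multiplying the term that contains $\pbr{\mathfrak{C}_m}{\pbr{F}{G}}$ trivialize; together with the graded antisymmetry this lets me solve the Jacobi identity for $\pbr{\pbr{F}{G}}{\mathfrak{C}_m}$ and express it as a signed combination of the two nested brackets $\pbr{F}{\pbr{G}{\mathfrak{C}_m}}$ and $\pbr{G}{\pbr{F}{\mathfrak{C}_m}}$. The precise signs are fixed by $g_1,g_2$ but play no role beyond bookkeeping.

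Next I would exploit the first-class hypothesis to kill both nested brackets weakly. Since $G$ is first class, $\pbr{G}{\mathfrak{C}_m}\approx 0$, so on the constraint surface it is a co-exterior combination $\pbr{G}{\mathfrak{C}_m}=\lambda_{mn}\bullet\mathfrak{C}_n$ of the constraints (the natural combination in this formalism, as already seen in $\tilde H = H + u_m\bullet\mathfrak{C}_m$). Inserting this into $\pbr{F}{\pbr{G}{\mathfrak{C}_m}}$ and applying the graded Leibniz rule of Theorem~\ref{the1} splits it into a term $\propto \pbr{F}{\lambda_{mn}}\bullet\mathfrak{C}_n$, which is weakly zero because it carries an explicit factor $\mathfrak{C}_n$, and a term $\propto \lambda_{mn}\bullet\pbr{F}{\mathfrak{C}_n}$, which is weakly zero because $F$ is first class. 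The identical argument handles $\pbr{G}{\pbr{F}{\mathfrak{C}_m}}$. Hence both nested brackets vanish weakly, and so does their Jacobi combination, which is exactly $\pbr{\pbr{F}{G}}{\mathfrak{C}_m}\approx 0$.

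The step I expect to be the main obstacle is the representation of a weakly vanishing Hamiltonian form as a co-exterior combination $\lambda_{mn}\bullet\mathfrak{C}_n$ of the constraints. In ordinary Dirac theory this relies on a regularity/completeness assumption on the constraint set, and here one must verify that the analogue survives within the restricted class of semi-basic Hamiltonian forms and is compatible with the co-exterior product $\bullet$ (and with the fact that the bracket maps back into the Gerstenhaber algebra of Hamiltonian forms) rather than with an ordinary pointwise product. Once this is granted, the remaining work is purely the graded-sign bookkeeping in the Jacobi and Leibniz identities, which is routine.
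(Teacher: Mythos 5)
Your proposal follows essentially the same route as the paper's own proof: the graded Jacobi identity applied to $F$, $G$, $\mathfrak{C}_m$, the representation of the weakly vanishing brackets as co-exterior combinations $f_m^n\bullet\mathfrak{C}_n$ of the constraints, and the graded Leibniz rule to kill both nested brackets weakly. The subtlety you flag about justifying that representation is real, but the paper simply posits it without further argument, so your treatment is on the same footing.
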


\begin{proof}
If ${F}$ and ${G}$ are first class, then 
\beq \label{eqv}
\pbr{{F}}{\mathfrak{C}_m} = f_m^n \bullet \mathfrak{C}_n,  \quad 
\pbr{{G}}{\mathfrak{C}_m}= g_m^n\bullet  \mathfrak{C}_n
\eeq
for some $(f+1)$-forms $f_m^n$ and $(g+1)$-form $g_m^n$, where $|{G} |=g$, $|{F} |=f$.
By the graded Jacobi identity  
\[
-(-1)^{d_G}\pbr{\mathfrak{C}_m}{\pbr{F}{G}} =  
(-1)^{d_F} \pbr{{F}}{\pbr{{G}}{\mathfrak{C}_m}}  
+(-1)^{d_F d_G} \pbr{G}{\pbr{{\mathfrak{C}_m}}{F}},  \]
where $d_G:= n-g-1, \; d_F:=n-f-1$. 
Using Eq. (\ref{eqv}), the graded Leibniz rule, and 
the graded antisymmetry 
of the bracket in the right hand side, we obtain
\beqa
&&(-1)^{d_G}\pbr{{F}}{g_m^n\bullet \mathfrak{C}_n }  
- (-1)^{d_Fd_G +d_F} \pbr{{G}}{f_m^n \bullet\mathfrak{C}_n }  
\approx 0.  \nn
\eeqa 
Thus, 
\[\pbr{\mathfrak{C}_m}{\pbr{F}{G}} \approx 0 \] 
for any first class Hamiltonian forms $F$ and $G$.
\end{proof}


Next, we consider the necessary condition for the 
conservation of 
the constraints ${\mathfrak C }_m forms: \bd \bullet {\mathfrak C}_m = 0$ 
(c.f. Eq. (\ref{bracketform})). 
They give rise to the  necessary 
consistency conditions 
\beq \label{constra}
\pbr{\tilde{H}}{\mathfrak{C}_m}=\pbr{H }{\mathfrak{C}_m} 
+ u^n \bullet \pbr{\mathfrak{C}_n}{\mathfrak{C}_m} \approx 0,
\eeq
where $u^n$ are one-form coefficients. 
Eq. (\ref{constra}) can either  impose a restriction on the $u$'s or it 
may reduce to a new relation independent of the $u$'s.  In the latter case, 
i.e. if the new relation between $p$'s and $\phi$'s  is independent of 
the primary constraints, 
it will be called a {\bf secondary constraint}, 
in accordance with  the 
conventional terminology introduced by Dirac and Bergmann. 

While the primary constraints are merely consequences of the definition of the polymomentum variables, 
the secondary constraints make use of the equations of motion 
(the DW Hamiltonian field equations) as well. 
If there is a secondary constraint written as an $(n-1)$-form  
${\mathfrak B}_m (\phi, p):=B_m^\mu\om_\mu$ appearing, 
a new consistency condition  in the form (\ref{constra})
\beq
\pbr{H}{{\mathfrak B}_m} + u^n \bullet\pbr{{\mathfrak B}_n}{\mathfrak{C}_m} \approx 0
\eeq
must be imposed. 
 Next, we must again check whether Eq. (22) implies new secondary 
constraints or whether it 
only restructures the $u$'s, and so on. 
It is similar to the standard procedure originally proposed by Dirac. 


 The {\bf second-class constraints } are present 
whenever the $(n-1)-$form valued matrix 
\beq
\mathfrak{C}_{mn} := \pbr{\mathfrak{C}_m}{\mathfrak{C}_n}
\eeq
does not vanish on the constraints surface. As usual, when computing ${\mathfrak C}_{mn}$  we  must not use the constraints equations until after calculating the 
Poisson-Gerstenhaber bracket. 
For simplicity, we assume that the constraints are irreducible and 
the rank of 
 ${\mathfrak C}_{mn}$  
is constant on the constraints surface. 
Since ${\mathfrak C}_{mn}$ is a nonsingular matrix whose components are $(n-1)$-forms, 
 its ``inverse'' ${\mathfrak C}^{-1}_{mn}$ exists such that 
\beq
 {\mathfrak C}^{-1}_{mn}\we {\mathfrak C}_{nk} = \delta_{mk}\ \om . 
\eeq 
The components of ${\mathfrak C}^{-1}_{mn}$ are one-forms. 
The key observation is that

\begin{proposition}
{\it For any Hamiltonian 
form $F$ 
we can construct a new Hamiltonian 
form $F'\approx F$,  
\beq \label{constrdir}
F' := F - 
\sigma \pbr{F}{\mathfrak{C}_m} \bullet ({\mathfrak C}^{-1}_{mn}\we {\mathfrak C}_n) ,  
\eeq
which has vanishing brackets with all second class constraints. }
\end{proposition}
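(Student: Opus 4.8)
The plan is to verify the two defining features of $F'$ in turn: that $F'\approx F$, and that $\pbr{F'}{{\mathfrak C}_k}\approx 0$ for every second-class constraint ${\mathfrak C}_k$. The weak equality is the easy half. The correction term $\sigma\pbr{F}{{\mathfrak C}_m}\bullet({\mathfrak C}^{-1}_{mn}\we{\mathfrak C}_n)$ carries the constraint $(n-1)$-form ${\mathfrak C}_n$ as a factor; since for a one-form $\alpha$ and an $(n-1)$-form $\beta=\beta^\mu\om_\mu$ one has $\alpha\we\beta=\alpha_\mu\beta^\mu\,\om$, the combination ${\mathfrak C}^{-1}_{mn}\we{\mathfrak C}_n$ is a volume-form multiple that is linear in the $C_n^\mu$. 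The whole term is therefore $\bullet$-proportional to the constraints and vanishes on the constraint surface, so $F'\approx F$.

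For the second feature I would write $\pbr{F'}{{\mathfrak C}_k}=\pbr{F}{{\mathfrak C}_k}-\sigma\,\pbr{\pbr{F}{{\mathfrak C}_m}\bullet({\mathfrak C}^{-1}_{mn}\we{\mathfrak C}_n)}{{\mathfrak C}_k}$ and expand the second term. Using graded antisymmetry to move the composite form into the slot where the graded Leibniz (derivation) property of Theorem~\ref{the1} applies, I then distribute the bracket with ${\mathfrak C}_k$ over the product. The mechanism mirrors Dirac's construction: working weakly, every term in which the bracket with ${\mathfrak C}_k$ does not strike the innermost factor ${\mathfrak C}_n$ still retains ${\mathfrak C}_n$ and hence drops out on the constraint surface. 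The single surviving contribution is the one in which the bracket hits ${\mathfrak C}_n$, producing $\pbr{{\mathfrak C}_n}{{\mathfrak C}_k}={\mathfrak C}_{nk}$.

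It then remains to collapse this surviving term, of the form $\pbr{F}{{\mathfrak C}_m}\bullet({\mathfrak C}^{-1}_{mn}\we{\mathfrak C}_{nk})$, by invoking the defining relation ${\mathfrak C}^{-1}_{mn}\we{\mathfrak C}_{nk}=\delta_{mk}\,\om$ of the inverse matrix together with the fact that $\om$ is the unit of the co-exterior product. This returns exactly $\pbr{F}{{\mathfrak C}_k}$, so that $\pbr{F'}{{\mathfrak C}_k}\approx\pbr{F}{{\mathfrak C}_k}-\pbr{F}{{\mathfrak C}_k}=0$, provided the factor $\sigma$ and the grading signs are checked to cancel.

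I expect the main obstacle to be twofold. First, the correction term mixes the co-exterior product $\bullet$ with the ordinary wedge $\we$, whereas Theorem~\ref{the1} supplies a derivation property only for $\bullet$; I would therefore either reduce ${\mathfrak C}^{-1}_{mn}\we{\mathfrak C}_n$ to a volume-form multiple before differentiating, so that the composite becomes a genuine $\bullet$-product with the unit $\om$, or separately justify a Leibniz rule across the mixed product. Second, and more delicately, the graded signs of the Poisson-Gerstenhaber algebra together with $\sigma=\pm1$ must conspire so that the surviving term cancels $\pbr{F}{{\mathfrak C}_k}$ exactly rather than up to a sign; tracking these degree-dependent signs through the antisymmetry and Leibniz steps is where the genuine bookkeeping lies. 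A minor point to dispatch along the way is that ${\mathfrak C}^{-1}_{mn}$, whose components are one-forms, should itself be a Hamiltonian form for its brackets in the discarded terms to be well defined, although those terms vanish weakly in any case.
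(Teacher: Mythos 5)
Your overall strategy coincides with the paper's: expand $\pbr{F'}{{\mathfrak C}_k}$ by the graded Leibniz rule, discard as weakly zero the term still proportional to the constraints, and collapse the surviving term through ${\mathfrak C}^{-1}_{mn}\we{\mathfrak C}_{nk}=\delta_{mk}\,\om$ together with $F\bullet\om=\sigma F$, which is where the prefactor $\sigma$ is absorbed ($\sigma^2=1$). Your argument for $F'\approx F$ is also fine (the paper does not even spell it out). However, the step you yourself flag as the ``main obstacle'' is exactly the step you leave unproved, and it is the only nontrivial point of the whole proof: how the bracket with ${\mathfrak C}_k$ acts on the \emph{mixed} product ${\mathfrak C}^{-1}_{mn}\we{\mathfrak C}_n$, given that Theorem~\ref{the1} supplies a derivation property only for the $\bullet$-product. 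The paper closes this gap with a specific identity: since ${\mathfrak C}_k$ is an $(n-1)$-form, its Hamiltonian multivector is an ordinary vector field $X_{{\mathfrak C}_k}$ with $X_{{\mathfrak C}_k}\inn\ \Omega=d{\mathfrak C}_k$, and $\pbr{{\mathfrak C}_k}{\,\cdot\,}=-\pounds_{X_{{\mathfrak C}_k}}(\,\cdot\,)$. Because the Lie derivative is a derivation of the $\we$-algebra, one gets
$\pbr{{\mathfrak C}_k}{{\mathfrak C}^{-1}_{mn}\we{\mathfrak C}_n}=-\pounds_{X_{{\mathfrak C}_k}}({\mathfrak C}^{-1}_{mn})\we{\mathfrak C}_n-{\mathfrak C}^{-1}_{mn}\we\pounds_{X_{{\mathfrak C}_k}}({\mathfrak C}_n)\approx -{\mathfrak C}^{-1}_{mn}\we{\mathfrak C}_{nk}=-\delta_{mk}\,\om$.
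Without this (or an equivalent substitute) your ``single surviving contribution'' is asserted rather than derived.

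Of the two workarounds you sketch, only the second (``separately justify a Leibniz rule across the mixed product'') leads anywhere, and the Lie-derivative identity above is precisely that justification. The first --- reducing ${\mathfrak C}^{-1}_{mn}\we{\mathfrak C}_n$ to a volume-form multiple $f_m\om$ \emph{before} differentiating --- does not work as stated: $n$-forms lie outside the domain of the map (\ref{map}) that defines Hamiltonian forms and their brackets, and even at a formal level you would then need the bracket to act as a derivation on the coefficient function $f_m=({\mathfrak C}^{-1}_{mn})_\mu C_n^\mu$, which again requires representing $\pbr{{\mathfrak C}_k}{\,\cdot\,}$ by a vector field. So: right architecture, correct identification of where the work lies, but the load-bearing lemma is missing.
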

Here, $\sigma $ is the signature of the metric whose appearence 
will be clarified later. 
The parentheses in Eq. (\ref{constrdir}) are important because the algebraic system of 
exterior forms equipped with two products $\we$ and $\bullet$ 
is {\bf not associative}. 

\begin{proof}
Using the (graded) Leibniz rule with respect to the 
$\bullet$-product, which is fulfilled by the Poisson-Gerstenhaber 
bracket of Hamiltonian forms, we obtain  
\beqa
\pbr{F'}{{\mathfrak C}_k} &=&\pbr{F}{{\mathfrak C}_k} - \sigma \pbr{\pbr{F}{\mathfrak{C}_m}\bullet 
({\mathfrak C}_{mn}^{-1}\we {\mathfrak C}_n)}{{\mathfrak C}_k} \nn \\
&=& \pbr{F}{{\mathfrak C}_k} 
- \sigma\pbr{F}{\mathfrak{C}_m}\bullet  \pbr{{\mathfrak C}_{mn}^{-1}\we {\mathfrak C}_n}{{\mathfrak C}_k} 
\nn\\
&-& \sigma \pbr{\pbr{F}{\mathfrak{C}_m}}{{\mathfrak C}_k}\bullet ({\mathfrak C}_{mn}^{-1}\we {\mathfrak C}_n) .  
\eeqa 
As the last term  weakly vanishes, 
\beq \label{egn22}
\pbr{F'}{{\mathfrak C}_k} 
\approx \pbr{F}{{\mathfrak C}_k} 
- \sigma \pbr{F}{\mathfrak{C}_m}\bullet  \pbr{{\mathfrak C}_{mn}^{-1}\we {\mathfrak C}_n}{{\mathfrak C}_k} . 
\eeq
%
Let us consider the bracket 
\beq
\pbr{{\mathfrak C}_k}{{\mathfrak C}_{mn}^{-1}\we {\mathfrak C}_n} = -\pounds_{X_{{\mathfrak C}_k}} ({\mathfrak C}_{mn}^{-1}\we {\mathfrak C}_n), 
\eeq
where $\pounds_{X_{{\mathfrak C}_k}}$ is the Lie derivative 
 with respect to the vector field 
$X_{{\mathfrak C}_k}$  
associated with the Hamiltonian form ${\mathfrak C}_k$ via the correspondence 
given by the polysymplectic form $\Omega$: 
$$X_{{\mathfrak C}_k}\inn\ \Omega = d {\mathfrak C}_k$$   
(c.f. Eq. (7)).  
As the Lie derivative is a derivation on the $\we$-algebra, 
\beq
 \pbr{{\mathfrak C}_k}{{\mathfrak C}_{mn}^{-1}\we {\mathfrak C}_n}= -\pounds_{X_{{\mathfrak C}_k}} ({\mathfrak C}_{mn}^{-1}) \we {\mathfrak C}_n 
- {\mathfrak C}_{mn}^{-1}\we \pounds_{X_{{\mathfrak C}_k}}({\mathfrak C}_n) , 
\eeq 
we obtain 
\beq 
 \pbr{{\mathfrak C}_k}{{\mathfrak C}_{mn}^{-1}\we {\mathfrak C}_n}
\approx  {\mathfrak C}_{mn}^{-1}\we \pbr{{\mathfrak C}_k}{{\mathfrak C}_n} 
= - {\mathfrak C}_{mn}^{-1} \we  {\mathfrak C}_{nk}
= - \delta_{mk}\ \omega .  
\eeq 
Further, using $*\om  = \sigma$,  
so that for any form $F$:  
$F\bullet  \omega 
= \sigma F, $  
we obtain  
\beq
\pbr{F'}{{\mathfrak C}_k} 
\approx \pbr{F}{{\mathfrak C}_k} 
-\sigma \pbr{F}{\mathfrak{{\mathfrak C}}_m}\bullet \delta_{mk} \omega 
= \pbr{F}{{\mathfrak C}_k}  - \sigma^2 \pbr{F}{{\mathfrak C}_k} = 0. 
\eeq
Hence, 
\[ 
\pbr{F'}{{\mathfrak C}_k} \approx 0  .
\] 
\end{proof}


Now, following a well known  way of introducing the Dirac bracket, 
we may try 
to define it as $ \pbr{F'}{G'}$. 
Though $F' \approx F$ and  $G' \approx G$, the bracket 
$\pbr{F'}{G'}$ is not weakly equal to $\pbr{F}{G}$  
and the desired 
properties 
\beq \label{props}
\pbr{F'}{G'}\approx \pbr{F'}{G}\approx \pbr{F}{G'}  
\eeq
are satisfied 
if $F$ and $G$ have the degree $(n-1)$.   
In this case we obtain the following formula for 
the analogue of the Dirac bracket  
\beq \label{diracbracket}
\pbr{F}{G}^D := \pbr{F}{G} - \sigma \pbr{F}{\mathfrak{C}_m}
\bullet ({\mathfrak C}_{mn}^{-1}\we \pbr{ {\mathfrak C}_n}{G}).
\eeq

\noindent 
The properties of the bracket (\ref{diracbracket}) are 
described by the following   
\begin{proposition} 
The Dirac bracket of any Hamiltonian 
(n-1)- 
form with a second class constraint  vanishes. 
\end{proposition}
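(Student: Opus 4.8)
The plan is to obtain the statement by a direct computation, specializing the defining formula (\ref{diracbracket}) of the Dirac bracket to the case in which the second argument is itself a (second class) constraint $\mathfrak{C}_k$. This is in fact shorter than the proof of the preceding Proposition, because no Lie-derivative manipulation is required: the bracket that enters the correction term is already one of the entries of the constraint matrix.

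First I would set $G=\mathfrak{C}_k$ in (\ref{diracbracket}) and use the definition $\mathfrak{C}_{nk}=\pbr{\mathfrak{C}_n}{\mathfrak{C}_k}$ to recognise the inner bracket, giving
\[
\pbr{F}{\mathfrak{C}_k}^D = \pbr{F}{\mathfrak{C}_k} - \sigma\,\pbr{F}{\mathfrak{C}_m}\bullet(\mathfrak{C}^{-1}_{mn}\we \mathfrak{C}_{nk}).
\]
Next I would evaluate the grouped factor $\mathfrak{C}^{-1}_{mn}\we\mathfrak{C}_{nk}$ first — it must stay inside the parentheses because, as stressed after (\ref{constrdir}), the $\we$- and $\bullet$-products do not associate — and replace it by $\delta_{mk}\,\om$ using the defining relation of the inverse matrix $\mathfrak{C}^{-1}_{mn}$. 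The correction term then collapses to $\sigma\,\pbr{F}{\mathfrak{C}_k}\bullet\om$, and applying the identity $F\bullet\om=\sigma F$ (a consequence of $*\om=\sigma$, with $\sigma^2=1$) turns it into $\sigma^2\pbr{F}{\mathfrak{C}_k}=\pbr{F}{\mathfrak{C}_k}$. The two terms cancel, so $\pbr{F}{\mathfrak{C}_k}^D\approx 0$.

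The restriction to degree $n-1$ is inherited from the domain on which the Dirac bracket (\ref{diracbracket}) is defined — it is for $(n-1)$-forms that the properties (\ref{props}) hold — and it ensures that the degrees match throughout, so that $\mathfrak{C}^{-1}_{mn}\we\mathfrak{C}_{nk}$ is the top form $\delta_{mk}\,\om$ on which $\bullet$ acts as multiplication by $\sigma$. The only genuine subtlety I anticipate is the non-associativity flagged after (\ref{constrdir}): the parentheses force one to contract $\mathfrak{C}^{-1}_{mn}$ with $\mathfrak{C}_{nk}$ before taking the $\bullet$-product with $\pbr{F}{\mathfrak{C}_m}$, and it is precisely this order that produces the clean cancellation. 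As in the preceding proofs, one should also keep to the rule that the Poisson-Gerstenhaber brackets are evaluated before the constraint (weak) equalities — here the inverse relation — are imposed.
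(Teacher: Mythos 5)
Your argument is correct, but it is not the route the paper takes. The paper proves this proposition in one line, by combining Eq.~(\ref{props}) with Proposition 3.2: since $\pbr{F}{G}^D\approx\pbr{F'}{G'}\approx\pbr{F'}{G}$ for $(n-1)$-forms, setting $G=\mathfrak{C}_k$ reduces the claim to the already-established fact that $F'$ has weakly vanishing brackets with all second-class constraints. You instead substitute $G=\mathfrak{C}_k$ directly into the defining formula (\ref{diracbracket}), identify $\pbr{\mathfrak{C}_n}{\mathfrak{C}_k}=\mathfrak{C}_{nk}$, collapse the grouped factor via $\mathfrak{C}^{-1}_{mn}\we\mathfrak{C}_{nk}=\delta_{mk}\,\om$, and use $F\bullet\om=\sigma F$ to cancel the two terms. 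This is a genuinely different and in some respects tighter argument: it is self-contained (it does not lean on the weak equalities (\ref{props}), which the paper asserts but does not prove), it needs none of the Lie-derivative and Leibniz manipulations of Proposition 3.2 because the parenthesization in (\ref{diracbracket}) already places $\pbr{\mathfrak{C}_n}{G}$ inside the $\we$-product, and the cancellation it produces is exact to the extent that the inverse relation itself is exact. What the paper's route buys is economy and the conceptual reading of the Dirac bracket as ``the ordinary bracket of the primed variables''; what yours buys is a verification straight from the definition, with the degree count ($\mathfrak{C}^{-1}_{mn}$ a one-form, $\mathfrak{C}_{nk}$ an $(n-1)$-form, their wedge a top form on which $\bullet$ acts as multiplication by $\sigma$) made explicit. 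Your closing remarks on non-associativity and on evaluating brackets before imposing weak equalities are consistent with the paper's own caveats. Both arguments establish the proposition.
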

\begin{proof}
The statement follows from Eq. (\ref{props}) and Proposition 3.2. 
\end{proof}

\begin{proposition} The Dirac bracket fulfills 
the Lie algebra properties 
\beqa 
\pbr{F}{G}^D & \approx & - \pbr{G}{F}^D , \nn \\
\pbr{\pbr{F}{G}^D}{K}^D &+&
 \pbr{\pbr{G}{K}^D}{F}^D + \pbr{\pbr{K}{F}^D}{G}^D  \approx 0  .
\eeqa
\end{proposition}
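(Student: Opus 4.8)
The plan is to derive both relations from the graded Gerstenhaber structure of the underlying PG bracket (Theorem \ref{the1}) by exploiting the representation of the Dirac bracket as an ordinary PG bracket of the projected forms. Throughout I use that $F$, $G$, $K$ are Hamiltonian $(n-1)$-forms, so that each carries a grading index $g=n-(n-1)-1=0$; consequently every sign $(-1)^{g_i g_j}$ in Theorem \ref{the1} equals $+1$, and the graded antisymmetry and graded Jacobi identity collapse to the ungraded cyclic forms displayed in the Proposition. A second fact I would use repeatedly is that the Hamiltonian multivector field of an $(n-1)$-form is an ordinary (vertical) vector field, so that $\pbr{F}{\,\cdot\,}=-\pounds_{X_F}$ as in the proof of Proposition 3.2; being a vertical Lie derivative, it commutes with the Hodge star and is therefore a derivation of \emph{both} the $\we$- and the $\bullet$-products. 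This is what tames the non-associativity of $(\we,\bullet)$ flagged after (\ref{constrdir}): a bracket with an $(n-1)$-form may be distributed over any product by a Leibniz rule.

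For the antisymmetry I would first record the degree count $|\pbr{F}{\mathfrak C_m}|=n-1$, $|\mathfrak C_{mn}^{-1}\we\mathfrak C_n|=n$, and $|A\bullet B|=|A|+|B|-n$, which shows that the forms $F'$ and $G'$ of (\ref{constrdir}) are again $(n-1)$-forms. By the relocation identities (\ref{props}) the Dirac bracket then admits the representation $\pbr{F}{G}^D\approx\pbr{F'}{G'}$. The graded antisymmetry of the PG bracket with $g_1=g_2=0$ gives $\pbr{F'}{G'}=-\pbr{G'}{F'}\approx-\pbr{G}{F}^D$, which is the first relation.

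For the Jacobi identity the strategy is to reduce the cyclic sum of nested Dirac brackets to the cyclic PG-Jacobi sum of Theorem \ref{the1}, which vanishes. The guiding principle is that the second-class constraints behave as null elements of the Dirac bracket: by Proposition 3.3 (equivalently, by inserting (\ref{diracbracket}) and using $\mathfrak C_{mn}^{-1}\we\mathfrak C_{nk}=\delta_{mk}\,\om$ together with $F\bullet\om=\sigma F$) one has $\pbr{F}{\mathfrak C_m}^D\approx0$ for every $(n-1)$-form $F$. Hence, when (\ref{diracbracket}) is substituted into each term of the cyclic sum and the three contributions are collected, the correction pieces built from $\mathfrak C_{mn}^{-1}$ and the brackets with the constraints are designed to cancel, leaving only the PG cyclic sum of the projected forms $F'$, $G'$, $K'$. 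The Leibniz-derivation property of $\pbr{\,\cdot\,}{\,\cdot\,}$ on $(n-1)$-forms noted above is what permits distributing the outer bracket over the $\bullet$- and $\we$-products in the corrections without running into non-associativity.

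The hard part will be twofold. First, weak equality is not preserved inside a PG bracket --- this is exactly the phenomenon $\pbr{F'}{G'}\not\approx\pbr{F}{G}$ recorded before (\ref{props}) --- so every substitution of a weakly equal expression into a bracket must be justified by checking that the discrepancy is proportional to the second-class constraints while the partner entry has been Dirac-projected, so that the offending bracket vanishes weakly by Proposition 3.2. Second, the corrections produce derivatives of the inverse matrix $\mathfrak C_{mn}^{-1}$, which must be handled through the identity obtained by applying the vertical Lie derivative to its defining relation, $\pounds_X(\mathfrak C_{mn}^{-1})\we\mathfrak C_{nk}\approx-\mathfrak C_{mn}^{-1}\we\pounds_X(\mathfrak C_{nk})$, the $(n-1)$-form analogue of $\delta(M^{-1})=-M^{-1}(\delta M)M^{-1}$. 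Tracking these terms through the cyclic sum, with the parenthesization of $\bullet$ and $\we$ held fixed, is where the genuine computational work lies; the grading signs, by contrast, cause no trouble since they are all trivial for $(n-1)$-forms.
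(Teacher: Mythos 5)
Your proposal is correct and follows essentially the same route as the paper, whose own proof consists precisely of invoking the graded antisymmetry and Jacobi identity of Theorem \ref{the1} (with all signs trivial for $(n-1)$-forms) together with the relocation identities (\ref{props}) that let one trade $\pbr{F}{G}^D$ for $\pbr{F'}{G'}$. Your version is considerably more explicit about where the real work hides --- the failure of weak equality inside a bracket and the variation of ${\mathfrak C}^{-1}_{mn}$ --- but these are elaborations of, not departures from, the argument the paper gives.
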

\begin{proof}
These weak identities follow from the properties of the Poisson-Gerstenhaber bracket in Theorem 2.1  
and Eq. (\ref{props}). 
\end{proof}

The space of Hamiltonian $(n-1)$ forms on which the bracket has been 
defined is not closed with respect the product operations $\bullet$ and $\we$. 
We, therefore, can not discuss the Leibniz property without extending 
the space of forms. A minimal extension would involve forms of degree $0$ 
or  $n$. Let us extend the definition of the bracket in Eq. (33) to the case when 
one of the arguments is a $0$-form $k$ 
(the bracket of two $0$-forms vanishes identically):
\beq
\pbr{F}{k}^D= \pbr{F}{k} - \sigma \pbr{F}{\mathfrak{C}_m}
\bullet ({\mathfrak C}_{mn}^{-1}\we \pbr{ {\mathfrak C}_n}{k}).
\eeq 
%
%
Then the following analogues of the Leibniz property follow:
\beqa 
\pbr{F}{kG}^D&=& \pbr{F}{k}^D G + k \pbr{F}{G}^D, \\ 
\pbr{F}{kl}^D&=& \pbr{F}{k}^D l + k \pbr{F}{l}^D .
\eeqa
The Leibniz properties for the $\bullet-$multiplication with $n$-forms  
are similar because $F\bullet k\omega = \sigma k F$. 


\section{Conclusion} 
We presented a  generalization of the Dirac constrained dynamics 
algorithm and a generalization of the Dirac bracket formula 
to the De Donder-Weyl Hamiltonian formalism of degenerate Lagranigian theories. 
The Dirac bracket is defined for dynamical variables given by Hamiltonian 
$(n-1)-$forms if the constraints can also be organized into Hamiltonian 
$(n-1)-$forms. A possible   generalization to forms of arbitrary degree 
and non-Hamiltonian forms  
remains an open issue. We also left beyond the scope of our discussion 
a possible geometrical interpetation of the construction in terms of the 
restriction of the polysymplectic structure to the subspace of constraints. 

Let us recall that in the geometric calculus of variations 
the issues related to the 
degeneracy  of the Lagrangian can be treated with the help 
of Lepagean equivalents of 
the Cartan form \cite{lepagean} which effectively modify the definitions of 
polymomentum variables and the corresponding regularity  conditions. 
It would be interesting to understand if our formal Dirac-like treatment 
of constraints in the 
DW theory could be understood geometrically using the theory 
of Lepagean equivalents. 
Yet another promising approach 
to the geometrical undertanding of the formalism could be based on the 
extension of the Cartan form to the constrained variational problems 
as presented in Ref. \cite{rodrigo}.  

Note in conclusion, that the approach developed here can be applied to the 
Dirac spinor field \cite{ikanat-tobe} which is a singular theory with  
second-class constraints from the point of view of the 
DW Hamiltonian formulation.

\section*{Acknowledgments} 
I thank Prof. Hellwig for his warm hospitality at TU Berlin. 
A discussion with  Prof. Krupkov\'a is gratefully acknowledged.

\end{document}